\newcommand{\var}{\mbox{ Var }}
\newtheorem{theorem}{Theorem}
\newtheorem{prop}{Proposition}
\title[Sums of Possibly Associated Bernoulli Variables]{Sums of Possibly Associated Bernoulli Variables: The Conway-Maxwell-Binomial Distribution}
\author{Joseph B. Kadane}
\date{}                                          
\begin{document}
\maketitle

\begin{abstract}
The study of sums of possibly associated Bernoulli random variables has been hampered by an asymmetry between positive correlation and negative correlation.  The Conway-Maxwell Binomial (COMB) distribution and its multivariate extension, the Conway-Maxwell Multinomial (COMM) distribution, gracefully model both positive and negative association. Sufficient statistics and a family of proper conjugate distributions are found. The relationship of this distribution to the exchangeable special case is explored, and two applications are discussed.
 \end{abstract}

\section{Sums of Possibly Associated Bernoulli Variables}\label{sec:sums}

There often are reasons to suggest that Bernoulli random variables, while identically distributed, may not be independent. For example, suppose pots are planted with six seeds each, where each pot has seeds from a unique plant, but different pot's seeds came from different plants. Suppose that success of a seedling is well-defined. If genetic similarity is the dominant source of non-independence, it is reasonable to suppose positive association. However, if competition for nutrients and sunlight predominates, association could be negative.  Hence, it makes sense to find a functional form that gracefully allows for either positive or negative association.

``Association" here means something more general than correlation. Correlation is a particular measure of association, familiar because of its connection with the normal distribution, and its simple relationship to certain expectations. However, there is no particular reason why correlation should be used in non-normal situations if it has undesirable properties.

The desire for a functional form that allows for both positive and negative association runs into the following familiar fact, which is well-known, but for completeness is proved in Appendix A:

\begin{prop}
Suppose $X_1,\ldots, X_m$ have (possibly different) means and variances and common pairwise correlations $\rho$.  Then $\rho \geq -1/(m-1)$. 
\end{prop}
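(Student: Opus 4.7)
The plan is to exploit the fact that the variance of any real-valued random variable is non-negative, applied to a suitable linear combination of the $X_i$. To remove the effect of the (possibly different) means and variances, I would first standardize.

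First I would define $Y_i = (X_i - \mu_i)/\sigma_i$, where $\mu_i$ and $\sigma_i^2$ are the mean and variance of $X_i$. Each $Y_i$ has mean $0$ and variance $1$, and since correlation is invariant under affine rescaling, $\mathrm{Cov}(Y_i, Y_j) = \mathrm{Corr}(X_i, X_j) = \rho$ for all $i \neq j$.

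Next I would consider $S = Y_1 + \cdots + Y_m$ and expand its variance using bilinearity of covariance. The $m$ diagonal terms each contribute $1$, and the $m(m-1)$ off-diagonal terms each contribute $\rho$, giving $\mathrm{Var}(S) = m + m(m-1)\rho$.

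Finally, since $\mathrm{Var}(S) \geq 0$, we obtain $m + m(m-1)\rho \geq 0$, which rearranges to $\rho \geq -1/(m-1)$. There is no real obstacle here; the only subtlety is remembering to standardize first so that the argument is not confused by heterogeneous variances, since otherwise one would be computing a weighted variance whose positivity still works but is slightly less transparent.
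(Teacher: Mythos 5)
Your proof is correct and is essentially identical to the paper's own argument: standardize to $Y_i=(X_i-\mu_i)/\sigma_i$, then use $0 \leq \var(\sum_{i=1}^m Y_i) = m + m(m-1)\rho$ and rearrange. No differences worth noting.
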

There are (at least) three different possible strategies for dealing with the asymmetry between positive and negative correlation revealed by the proposition:
\begin{itemize}
\item[a)] abandon correlation as a measure of association
\item[b)] abandon exchangeability of the Bernoulli random variables
\item[c)] model the sum directly, without fully specifying the distribution of the underlying Bernoulli random variables.
\end{itemize}

Some light on strategies b) and c) is shed by the following proposition, also proved in Appendix~A.

\begin{prop}\label{prop2}
Let $P\{S=k\} = p_k \geq 0$, where $\sum^m_{k=0} p_k=1$. Then there exists a unique distribution on $X_1,\ldots, X_m$ such that $X_1, \ldots, X_m$ are exchangeable, and $\sum^n_{i=1}X_i$ has the same distribution as does $S$.
\end{prop}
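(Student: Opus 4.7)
The plan is to use the defining symmetry of exchangeability to reduce the joint distribution to a function of the sum only, and then match that function to the prescribed marginal of $S$.

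First I would observe that, since each $X_i$ takes values in $\{0,1\}$, the joint distribution of $(X_1,\ldots,X_m)$ is determined by the $2^m$ probabilities $P(X_1=x_1,\ldots,X_m=x_m)$ for $(x_1,\ldots,x_m)\in\{0,1\}^m$. Exchangeability forces this probability to depend on $(x_1,\ldots,x_m)$ only through $k=\sum_i x_i$; call this common value $q_k$. Since there are exactly $\binom{m}{k}$ binary vectors with sum $k$, the induced law of $S=\sum_i X_i$ satisfies $P(S=k)=\binom{m}{k}q_k$.

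For existence, I would simply define
\[
P(X_1=x_1,\ldots,X_m=x_m)=\frac{p_{\sum_i x_i}}{\binom{m}{\sum_i x_i}},
\]
and verify the three requirements: (i) nonnegativity, which is immediate from $p_k\geq 0$; (ii) total mass one, which follows by grouping the $2^m$ outcomes by their sum and using $\sum_{k=0}^m p_k=1$; and (iii) exchangeability, which is immediate because the right-hand side depends on the argument only through $\sum_i x_i$, a permutation-invariant quantity. Under this law, the marginal of $S$ is $\binom{m}{k}\cdot p_k/\binom{m}{k}=p_k$, as required.

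For uniqueness, I would run the argument above in reverse: any exchangeable law on $\{0,1\}^m$ must assign a common value $q_k$ to all outcomes with $\sum_i x_i=k$, and matching the marginal of $S$ forces $\binom{m}{k}q_k=p_k$, so $q_k$ is uniquely determined. Hence there is only one such joint law. No step is technically difficult here; the only conceptual point worth being careful about is recognizing that exchangeability of a $\{0,1\}$-valued vector collapses the joint distribution to a single sequence $(q_0,\ldots,q_m)$, which is precisely what makes the correspondence with distributions of $S$ a bijection.
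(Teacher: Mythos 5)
Your proposal is correct and follows essentially the same route as the paper: you construct the joint law by spreading $p_k$ uniformly over the $\binom{m}{k}$ binary sequences with sum $k$, and you prove uniqueness by noting that exchangeability forces equal probabilities within each level set while the marginal of $S$ pins down their common value. Your write-up just spells out the verification steps in slightly more detail than the paper does.
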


 Proposition~\ref{prop2} is reassuring with respect to strategy c), since the set of distributions on the $X$'s corresponding to an arbitrary distribution on their sum is non-empty. However, it also shows that one can assume exchangeability among the $X$'s without restricting the distribution of their sum, so strategy b) is superfluous. (This fact is also a consequence of \nocite{galambos1978} Galambo's (1978) Theorem 3.2.1.)
 
The distribution studied in this paper pursues strategies a) and c) simultaneously.

There is a voluminous literature on sums of non-independent Bernoulli random variables. An early paper of \citet{Skellam1948} proposed the beta-binomial distribution, a beta mixture of binomials. Thus, the underlying Bernoulli random variables are exchangeable, so this proposal can model positive association, but not negative association (see also \citet{williams1975}). \citet{altham1978} introduces an arithmetic and a multiplicative extension of the binomial distribution, with the intent of modeling non-independence. Both models are exchangeable, and hence are limited in modeling negative association.

\citet{kupper-haseman1978} propose an exchangeable model extending the binomial distribution, based on ideas of \citet{bahadur1961}. Once again the model is exchangeable, so the bounds on the common correlation allow a narrow band of negative correlations that rapidly diminish to zero as the sample size increases. Awkwardly, the parameter constraints depend on the data. S.R. Paul (1985, 1987) \nocite{paul1985,paul1987} proposes two models that aim to unify the beta-binomial and the Kupper/Haseman models. Both suffer from the inevitable narrow range of negative correlations possible, and from data-dependent parameter constraints.

Other papers discussing variations on exchangeable extensions of the binomial include \citet{prentice1986,madsen1993,luceno-deceballos1995,bowman-george1995,george-bowman1995,george-kodell1996,witt2004} and \citet{hisakado-etal2006}.

Additionally, there is a paper discussing the sum of not-identically distributed Bernouilli random variables with a common correlation \citep{gupta-tao2010}. They apply their results in the context of multiple testing. Two papers present two-state Markov models, which of course in equilibrium have identically distributed margins, but are not necessarily exchangeable \citep{viveros-etal1984,rudolfer1990}

By contrast, \citet{ng1989} starts with a completely general class of discrete distributions, defined in a complicated sequential scheme. He then specializes it to the exchangeable case, but in general allows for arbitrary dependent structures.

The remainder of this paper is organized as follows: Section~\ref{sec:cmb} introduces the Conway-Maxwell Binomial distribution and displays some of its mathematical properties. Section~\ref{sec:sufficient} gives sufficient statistics and discusses a conjugate prior family. Section~\ref{sec:understanding}  displays some examples, and gives expressions for its generating functions. The exchangeable case is examined is Section~\ref{sec:exchangeability}, and some applications are shown in Section~\ref{sec:applications}. Appendix C shows that the results given for the COMB distribute extend to its multivariate generalization, the COMM (Conway-Maxwell-Multivariate) distribution.

\section{The Conway-Maxwell Binomial Distribution}\label{sec:cmb}
The binomial distribution, the sum of independent Bernoulli random variables, is extraordinarily useful. Yet there are situations in which the assumption of independence is questionable or unwise. The Conway-Maxwell Binomial distribution (COMB) is a convenient two-parameter family that generalizes the binomial distribution and models both positive and negative association among the Bernoulli summands. 

The probability mass function of the COMB distribution is given by

\begin{equation}\label{eq:one}
P\{W=k\} = \frac{p^k(1-p)^{m-k} \binom{m}{k}^\nu}{S(p, \nu)} \  k=0,1, \ldots, m
\end{equation}
where
$$
S(p,\nu) = \sum^m_{k=0} p^k(1-p)^{m-k}\binom{m}{k}^\nu.
$$
Here $0 \leq p\leq 1$ and $-\infty \leq \nu \leq \infty$ (see \citet[eqn. (13)]{shmueli-etal2005}). Of course, when $\nu =1$, the binomial distribution results.

When $\nu > 1$, the center of the distribution is upweighted relative to the binomial distribution and the tails downweighted. In the limit as $\nu \rightarrow \infty$, $W$ piles up at $m/2$ if $m$ is even, and at $\lfloor m/2\rfloor$ and $\lceil m/2\rceil$ if $m$ is odd.  Thus the component Bernoulli random variables are negatively related. Conversely, when $\nu < 1$, the tails are upweighted relative to the binomial distribution, and the center downweighted. In the limit as $\nu \rightarrow -\infty$, (1) puts all its probability on $W=0$ and $W=m$, which is the extreme case of positive dependence (all $X$'s have the same value). As a consequence, the component Bernoullis are positively related. Thus, $\nu$ measures the extent of positive or negative association in the component Bernoullis. Figure~\ref{fig:new1} (from \citet{kadane-naeshagen2013}) illustrates these points.

\begin{figure}[htbp]
\includegraphics[width=4in]{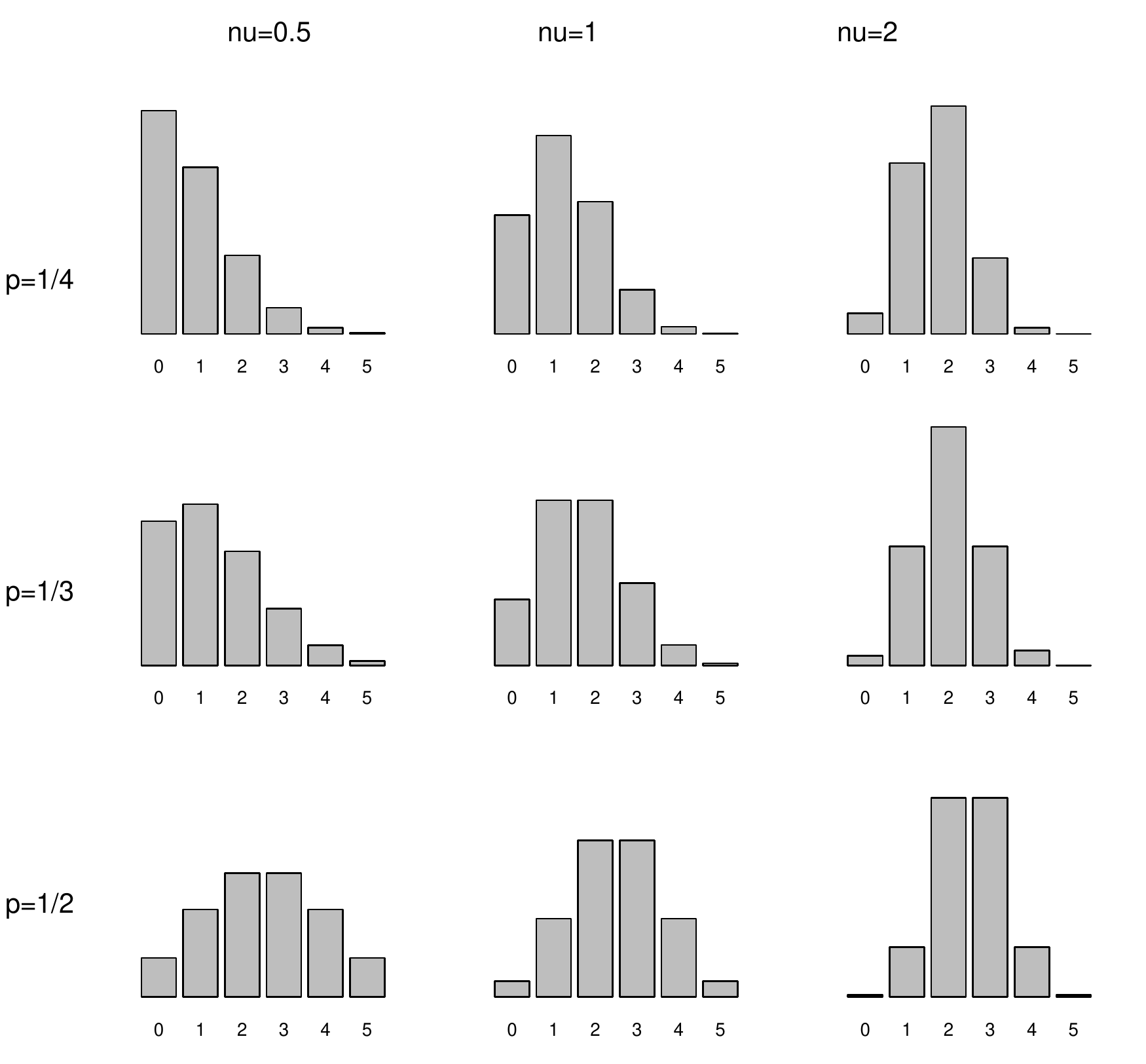}
\caption{Examples of COMB when $m=5$.}\label{fig:new1}
\end{figure}

The name ``Conway-Maxwell" comes from its relationship to the \nocite{conway-maxwell1962} Conway and Maxwell (1962) generalization of the Poisson distribution, COM-Poisson$(\lambda, \nu)$:
\begin{equation}\label{eq:two}
P\{W=x\} = \frac{\lambda^x}{(x!)^\nu \ M(\lambda, \nu)} \ \ x=0,1,\ldots
\end{equation}
where $M(\lambda, \nu) = \sum^\infty_{j=0} \lambda^j/(j!)^\nu$.

\citet{shmueli-etal2005} show that if $X \sim CMP(\lambda_1, \nu)$ and $Y \sim CMP (\lambda_2, \nu)$, $X$ and $Y$ independent, then
\begin{equation}\label{eq:three}
X \mid X + Y \sim COMB\left(\frac{\lambda_1}{\lambda_1 + \lambda_2},\nu\right),
\end{equation}
generalizing the familiar relationship between the Poisson and binomial distributions, when $\nu = 1$.

\section{Sufficient Statistics and a Conjugate Prior Family}\label{sec:sufficient}

Imagine $n$ samples from a COMB distribution, each with respect to a common $m$. Then the likelihood for $p$ and $\nu$ is governed by the data $k_1,\ldots, k_n$, and is given by
\begin{equation}\label{eq:four}
p(k_1,\ldots,k_n \mid p,\nu) = \frac{\prod^n_{i=1} p^{k_i}(1-p)^{m-k_i}\binom{m}{k_i}^\nu}{\left[S(p,\nu)\right]^n}
\end{equation}

The denominator is constant in the data, so it can be ignored. Then
\begin{eqnarray}\label{eq:five}
p(k_1,\ldots, k_n\mid p,\nu) & \propto & (1-p)^{mn}\prod^n_{i=0}\left(\frac{p}{1-p}\right)^{k_i} \frac{m!^{\nu n}}{(k_i!(m-k_i)!)^\nu} \nonumber \\
&\propto & e^{(\sum^m_{i=1} k_i)(\log(p/1-p))-\nu \sum^n_{i=1} \log [k_i!(m-k_i)!]}\nonumber \\
& = & e^{S_1 \log(p/(1-p))-\nu S_2}
\end{eqnarray}
where $S_1 = \sum^n_{i=1}k_i$ and $S_2 = \sum^n_{i=1}\log[k_i!(m-k_i)!]$. Thus the COMB distribution is a member of the exponential family. Consequently, it has a conjugate prior family. To find a convenient form for this family, start over with the likelihood
\begin{equation}\label{eq:six}
p^k(1-p)^{m-k}\binom{m}{k}^\nu \bigg/ \sum^m_{k=0} p^k(1-p)^{m-k} \binom{m}{k}^\nu.
\end{equation}
We may take out the inessential factors of $(1-p)^m(m!)^\nu$, yielding
$$
\left(\frac{p}{1-p}\right)^k \frac{1}{[k!(m-k)!]^\nu} \bigg/ \sum^m_{k=0} \left(\frac{p}{1-p}\right)^k \frac{1}{[k!(m-k)!]^\nu}.
$$

Let $\Psi=\log(p/(1-p))$. Then the likelihood is
\begin{equation}\label{eq:seven}
\frac{e^{\Psi k}}{[k!(m-k)!]^\nu}\bigg/ \sum^m_{k=0} e^{\Psi k}\big/[k!(m-k)!]^\nu,  \ \ k=0,1,\ldots, m.
\end{equation}

Consider a conjugate prior of the form
\begin{eqnarray}\label{eq:eight}
h(\Psi,\nu) & = &  g(\Psi,\nu)e^{\Psi a -b\nu}Z^{-c}(\Psi,\nu)K(a,b,c), \nonumber \\
& &   - \infty < \Psi < \infty, -\infty < \nu < \infty,
\end{eqnarray}
where $Z(\Psi,\nu)  =  \sum^m_{k=0} e^{\Psi k}\big/ [k!(m-k)!]^\nu$\\
and $K^{-1}(a,b,c)  =  \int^\infty_{-\infty} \int^\infty_{-\infty} g(\Psi,\nu)e^{\Psi a-b\nu}Z^{-c}(\Psi,\nu)d \Psi d\nu$.

To maintain the property that the conjugate prior family is closed under sampling, the factor $g(\Psi,\nu)$ can be chosen arbitrarily. However, the propriety of the conjugate family is important for two reasons:
\begin{enumerate}
\item the usual argument for updating  prior to posterior in conjugate families depends on the constant of proportionality being finite.
\item the proper behavior of numerical algorithms for computing posterior distributions, such as grid methods and Markov Chain Monte Carlo, also depend on the propriety of the posterior.
\end{enumerate}

For those reasons, it makes sense to choose $g(\Psi,\nu)$ so that $K^{-1}(a,b,c) < \infty$. There are additional reasons to do so. The function $e^{k x}$ goes to infinity as $x \rightarrow \infty$ if $k$ is positive, and to infinity as $x \rightarrow -\infty$ if $k$ is negative. Hence, if one chose $g(\Psi, \nu) \equiv 1$ in \eqref{eq:eight}, one would be declaring that extreme values of $\Psi$ and $\nu$ are much more likely than others. This would, I judge, be an unusual belief. Instead, I suspect that values of $\nu$ most likely to be of interest are those close to $1$, and values of $\Psi$ perhaps those close to zero. A simple choice expressing these ideas is
\begin{equation}\label{eq:nine}
g(\Psi, \nu) = \o (\Psi) \o(\nu-1)
\end{equation}
where $\o$ is the normal probability density function. Because the normal distribution goes to zero quickly for values far from its mean, this choice has the implication of ``taming" the tails of \eqref{eq:eight}. With this choice, the following theorem results:

\begin{theorem}\label{thm:one}
$K^{-1}(a,b,c) < \infty$.

The proof is in Appendix B.
\end{theorem}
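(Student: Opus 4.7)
The plan is to bound the integrand $\phi(\Psi)\phi(\nu-1)e^{\Psi a-b\nu}Z^{-c}(\Psi,\nu)$ pointwise by a separable product whose $\Psi$-factor and $\nu$-factor are each integrable against a Gaussian. The crucial point is the motivating observation made just before the theorem: the normal density decays faster than any exponential, so once $Z^{-c}$ is bounded above by an exponential in $|\Psi|$ and $|\nu|$, multiplying by $\phi(\Psi)\phi(\nu-1)$ tames the tails and the integral converges.

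My first step is to derive two elementary bounds on
$$Z(\Psi,\nu)=\sum_{k=0}^{m}\frac{e^{\Psi k}}{[k!(m-k)!]^{\nu}}.$$
Since every summand is positive, keeping only the $k=0$ term gives the lower bound $Z(\Psi,\nu)\ge (m!)^{-\nu}$, uniformly in $\Psi$. For an upper bound, $e^{\Psi k}\le e^{m|\Psi|}$ for $0\le k\le m$, and a short check on the sign of $\nu$ (using $1\le k!(m-k)!\le m!$) gives $[k!(m-k)!]^{-\nu}\le (m!)^{|\nu|}$; summing $m+1$ such terms yields
$$Z(\Psi,\nu)\le (m+1)\,e^{m|\Psi|}(m!)^{|\nu|}.$$

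Next I would split on the sign of $c$. If $c\ge 0$, the lower bound gives $Z^{-c}(\Psi,\nu)\le (m!)^{c\nu}$, so the integrand is dominated by
$$\phi(\Psi)\,e^{\Psi a}\;\cdot\;\phi(\nu-1)\,e^{(c\log m!-b)\nu},$$
whose iterated integral is a product of two standard Gaussian moment-generating-function integrals, hence finite. If $c<0$, the upper bound gives $Z^{-c}=Z^{|c|}\le (m+1)^{|c|}e^{m|c||\Psi|}(m!)^{|c||\nu|}$, and the integrand is dominated by
$$(m+1)^{|c|}\,\phi(\Psi)\,e^{\Psi a+m|c|\,|\Psi|}\;\cdot\;\phi(\nu-1)\,e^{-b\nu+|c|(\log m!)|\nu|}.$$
Each of these univariate factors is a Gaussian times $e^{C|\,\cdot\,|}$ for some constant $C$, which is integrable after splitting the real line at zero into two pieces, each of which gives a finite Gaussian MGF integral.

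The only real work is the case analysis underlying the pointwise bounds on $Z$; once those are in hand, there is no substantive analytic obstacle, and finiteness of $K^{-1}(a,b,c)$ follows immediately. I expect the sign bookkeeping for $\nu$ and $c$ to be the main place where the argument risks looking clumsy, but it is entirely routine. The argument does not use any special structure beyond $Z>0$, boundedness in $k$, and Gaussian tail decay, which is why the same reasoning should extend painlessly to the multivariate COMM extension mentioned in the introduction.
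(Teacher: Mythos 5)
Your proof is correct, but it takes a more elementary route than the paper. The paper bounds $Z$ from below via Jensen's inequality applied with an arbitrary auxiliary pmf $(q_0,\dots,q_m)$ on $\{0,\dots,m\}$, obtaining $Z(\Psi,\nu)\ge e^{\Psi E(Q)-\nu E\log(Q!(m-Q)!)}\prod_k q_k^{-q_k}$; since the exponent is linear in $(\Psi,\nu)$, multiplying $Z^{-c}$ by the normal factors $\phi(\Psi)\phi(\nu-1)$ reduces $K^{-1}$ to Gaussian (moment-generating-function) integrals, which are finite. Your lower bound $Z\ge (m!)^{-\nu}$ is exactly the degenerate case $Q=\delta_0$ of that Jensen bound, so for the case the paper actually treats (effectively $c\ge 0$, which is the relevant range since $c$ updates as $c'=c+1$ and is taken nonnegative) your argument is a streamlined special case of the paper's, trading the entropy-style bound for a single retained term. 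What your version adds is the explicit treatment of $c<0$ via the crude upper bound $Z\le (m+1)e^{m|\Psi|}(m!)^{|\nu|}$, a case the paper's lower-bound-only argument does not cover (its displayed inequality for $K^{-1}$, which should read as an upper bound, only follows from $Z\ge\underline{Z}$ when $c\ge 0$); on the other hand, the paper's formulation with general $Q$ makes transparent why any normal choice of $g$ works and transfers verbatim to the COMM case in Appendix C, as you note your bounds would too. Your sign bookkeeping for $\nu$ and $c$ is handled correctly, so there is no gap.
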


When propriety holds, the updating of \eqref{eq:eight} with data $k$ is given by
$$
a^\prime = a+k, b^\prime = b + \log (k!(m-k)!), \mbox{ and } c^\prime = c+1.
$$

\section{Understanding the COMB distribution}\label{sec:understanding}

One way to understand a distribution is to look at some representative examples of it. Figure~\ref{fig:new1} offers a matrix of such examples, for different values of $p$ and $\nu$.

Another way to understand a distribution is by way of its generating functions. These are derived next. Reconsider
\begin{eqnarray}\label{eq:new10}
S(p, \nu) & = & \sum^m_{k=0} \binom{m}{k}^\nu p^k(1-p)^{m-k} \nonumber\\
& = & (1-p)^m \sum^m_{k=0} \binom{m}{k}^\nu (\frac{p}{1-p})^k\\
& = & (1-p)^m T\left(\frac{p}{1-p},\nu\right). \nonumber
\end{eqnarray}
where $T(x,\nu) = \sum^m_{k=0} x^k\binom{m}{k}^\nu$.

Then the probability generating function of the COMB distribution can be expressed as
\begin{eqnarray}\label{eq:new11}
E(t^x) & = & \sum^m_{k=0} t^k p^k(1-p)^{m-k}\binom{m}{k}^\nu\bigg/S(p,\nu) \nonumber\\
& = & (1-p)^m \sum^m_{k=0} \left(\frac{tp}{1-p}\right)^k \binom{m}{k}^\nu \bigg/ S(p, \nu) \\
& = & T(tp/(1-p), \nu)\big/ T (p/(1-p), \nu). \nonumber
\end{eqnarray}
Similarly, the moment generating function and the characteristic function are, respectively,
\begin{equation}\label{eq:new12}
E (e^{tx}) = T(e^t p\big/(1-p), \nu)\big/T(p/(1-p), \nu)
\end{equation}
and
\begin{equation}\label{eq:new13}
E(e^{itx})= T(e^{it} p\big/(1-p), \nu) / T(p\big/(1-p),  \nu).
\end{equation}

\section{Exchangeability}\label{sec:exchangeability}

The COMB distribution is a distribution on the sum of $m$ (possibly dependent) Bernoulli components without specifying anything else about the joint distribution of those components. This section explores the consequences of assuming in addition that those components are exchangeable.

To establish notation, let
\begin{equation}\label{eq:ten}
p_{i, \ldots, i_m} = P\{X_1 = i_1, X_2 = i_2, \ldots, X_m= i_m\},
\end{equation}
where each $i_j \epsilon \{0,1\}$. Let $\pi$ be a permutation of $(i_i, \ldots, i_m)$. Then the random variables $X$ are called exchangeable just in case
\begin{equation}\label{eq:eleven}
p_{i, \ldots, i_m} = p_{\pi(i_1), \pi(i_2)\ldots, \pi(i_m)}
\end{equation}
for all permutations $\pi$.

Let $S(\ell, m)$ be the set of sequences $(i_i,\ldots, i_m)$ with exactly $\ell$ 1's, {\em i.e.}, satisfying $\sum^m_{j=1}i_j = \ell$. There are $\binom{m}{\ell}$ such sequences in $S(\ell, m)$. The following theorem is given in the literature (see \citet[Theorem 1]{diaconis1977} and the references cited there):

\begin{theorem}\label{theorem:2}
The set $\mathcal{E}_m$ of exchangeable sequences is a convex set whose extreme points are $e_o, \ldots, e_n$, where $e_\ell$ is the measure that puts probability $1/\binom{m}{\ell}$ on each element of $S(\ell,m)$ and $0$ otherwise. Each point $x \epsilon \mathcal{E}_m$ has a unique representation as a mixture of the $m+1$ extreme points.
\end{theorem}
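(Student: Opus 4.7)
The plan is to verify the four pieces in order: convexity of $\mathcal{E}_m$, membership of each $e_\ell$ in $\mathcal{E}_m$, existence and uniqueness of the representation of an arbitrary exchangeable law as a convex combination of the $e_\ell$, and finally extremality of each $e_\ell$. Convexity is immediate: if $p, q$ satisfy \eqref{eq:eleven} and $\alpha \in [0,1]$, then so does $\alpha p + (1-\alpha) q$, since \eqref{eq:eleven} is linear in the probability. Each $e_\ell$ lies in $\mathcal{E}_m$ because a permutation $\pi$ maps $S(\ell,m)$ bijectively to itself (it preserves the count of $1$'s), so the uniform law on $S(\ell,m)$ is invariant under $\pi$.

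The central observation, which I would present next, is that exchangeability of $p$ as defined in \eqref{eq:eleven} forces $p_{i_1,\ldots,i_m}$ to depend on $(i_1,\ldots,i_m)$ only through $\ell := \sum_j i_j$, because any two sequences in $S(\ell,m)$ differ by a permutation. Write the common value as $q_\ell$ and set $c_\ell := \binom{m}{\ell} q_\ell$, so that $c_\ell \geq 0$ and $\sum_{\ell=0}^m c_\ell = \sum_{(i_1,\ldots,i_m)} p_{i_1,\ldots,i_m} = 1$. Then for any $(i_1,\ldots,i_m) \in S(\ell,m)$,
\[
\sum_{j=0}^{m} c_j\, e_j(i_1,\ldots,i_m) = c_\ell \cdot \frac{1}{\binom{m}{\ell}} = q_\ell = p_{i_1,\ldots,i_m},
\]
so $p = \sum_\ell c_\ell e_\ell$. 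Uniqueness follows by evaluating any such representation $p = \sum_j c_j' e_j$ at a fixed sequence in $S(\ell,m)$: only the $j=\ell$ term contributes, forcing $c_\ell' = \binom{m}{\ell}\, p_{i_1,\ldots,i_m} = c_\ell$.

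Finally, to confirm that each $e_\ell$ is genuinely an extreme point (rather than just a convenient spanning set), suppose $e_\ell = \alpha p + (1-\alpha) q$ with $p, q \in \mathcal{E}_m$ and $0 < \alpha < 1$. Since $e_\ell$ vanishes off $S(\ell,m)$, both $p$ and $q$ must be supported on $S(\ell,m)$. But any exchangeable law supported on $S(\ell,m)$ must be uniform there, again because permutations act transitively on $S(\ell,m)$; hence $p = q = e_\ell$. Combined with the uniqueness in the previous paragraph, this shows the $e_\ell$ are precisely the extreme points.

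There is no real obstacle here; the only subtlety worth flagging is being careful in step three to distinguish the pointwise probability $q_\ell$ (which is what \eqref{eq:eleven} constrains) from the aggregated mass $c_\ell$ on the level set $S(\ell,m)$, and then matching the latter against $e_\ell$'s normalization $1/\binom{m}{\ell}$.
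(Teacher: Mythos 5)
Your proof is correct and complete: convexity, membership of each $e_\ell$, the reduction of an exchangeable law to its level-set weights $c_\ell=\binom{m}{\ell}q_\ell$, uniqueness by evaluation on a single sequence, and extremality via transitivity of the permutation action on $S(\ell,m)$ are all handled properly. Note that the paper itself gives no proof of this theorem, simply citing Diaconis (1977, Theorem 1); your argument is essentially the standard one found in that reference, so there is nothing to reconcile.
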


Viewed in this light, the exchangeable COMB distribution specifies a particular two parameter family, with parameters $p$ and $\nu$, of weights on the extreme points $e_o, \ldots, e_m$.

Because $m$-exchangeability applies to every permutation of length $m$, it implies $m^\prime$ exchangeability for each $m^\prime < m$. Hence as $m$ increases, $m$-exchangeability becomes increasingly restrictive. In the limit at $m = \infty$, deFinetti's Theorem shows that sums of exchangeable random variables are mixtures of Binomial random variables. Because the marginal distribution of each component is Bernoulli, interest centers on the joint distribution of pairs of such variables. By exchangeability, every pair has the same distribution as every other pair, so concentrating on $(X_1, X_2)$ suffices. Exchangeability implies 
that $P\{X_1 = 0, X_2=1\} = P\{X_1 = 1, X_2=0\}$, so there are really three probabilities to consider jointly, $p_{00} = P\{X_1 = 0, X_2=0\}$, $p_{01} = p_{10} = P \{X_1 = 0, X_2=1\}$, and $p_{11}=P
\{X_1 = 1, X_2 = 1\}$. \citet[p. 274]{diaconis1977} introduces a convenient way of graphing these quantities. The graph is reminiscent of barycentric co-ordinates, only here the constraint is slightly different:
\begin{equation}\label{eq:twelve}
p_{00} + 2 p_{01} + p_{11} = 1 \ ; \ p_{ij} \geq 0.
\end{equation}
Figures~\ref{fig:one} and \ref{fig:two} display the possible values of the exchangeable COMB distribution for specified values of $m$ and $\nu$, as $p$ varies from 0 to 1.

\begin{figure}
\includegraphics[width=5in]{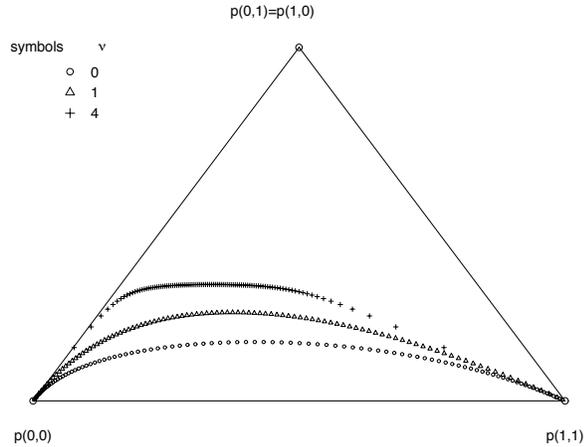}
\caption{Possible values for $P\{X_1 = i, X_2 = j\}$ when $m=3$.}\label{fig:one}
\end{figure}

\begin{figure}
\includegraphics[width=5in]{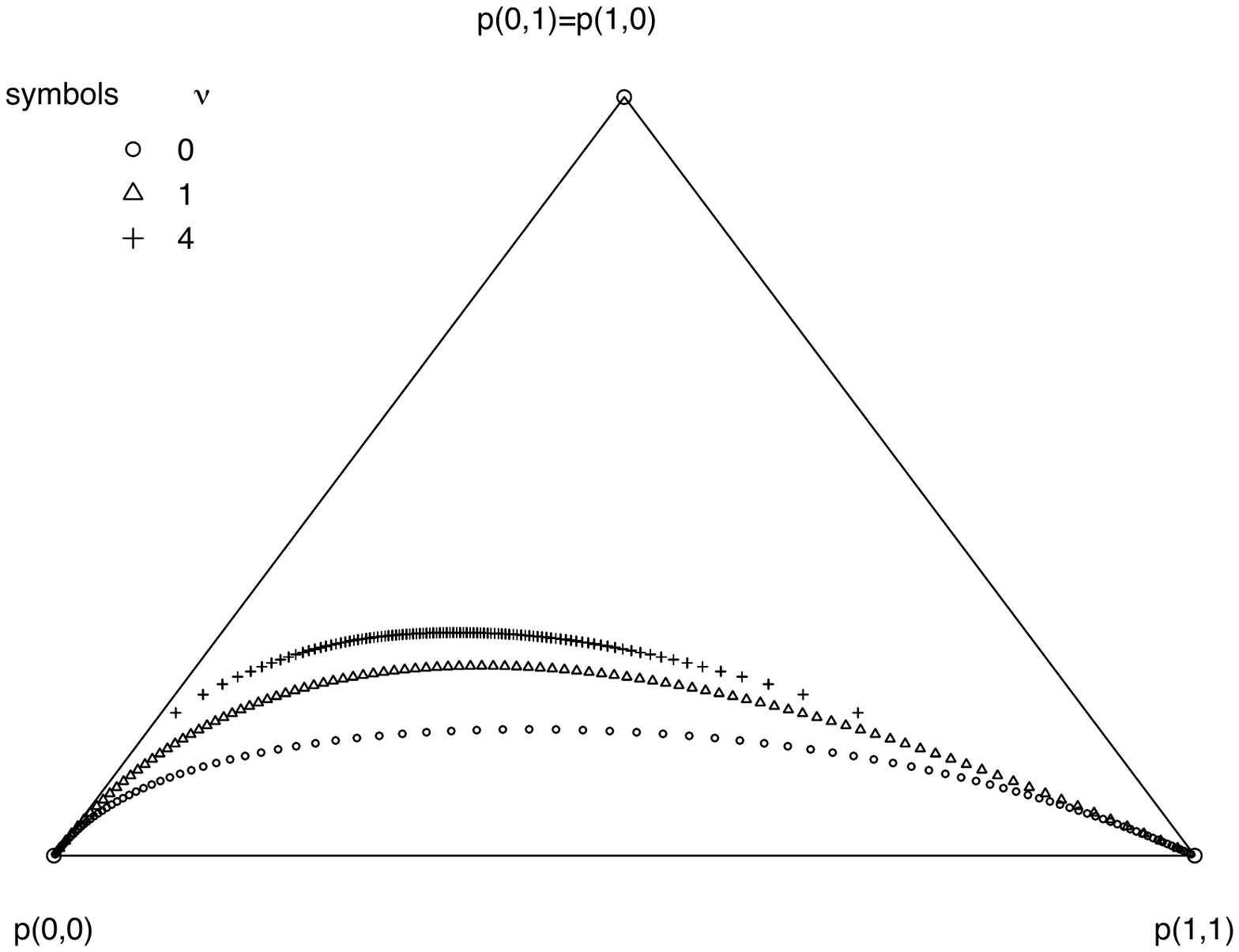}
\caption{Possible values for $P\{X_1 = i, X_2 = j\}$ when $m=5$.}\label{fig:two}
\end{figure}

In Figure~\ref{fig:one}, which is computed at $m=3$, the curve for $\nu = 4$ is the highest, showing, as expected, more weight on $p(0,1) = p(1,0)$. The curve $\nu=1$ is in the middle; this one corresponds to independence, and is known to be $y(1-y)$. The curve for $\nu=0$ is lowest. As $\nu \rightarrow -\infty$, this curve descends to the $p(0,0)$ to $p(1,1)$ lines, indicating that all the probability is at the extremes.

Figure~\ref{fig:two} shows the same curve, when $m=5$. The main difference is that the $\nu=4$ curve is flatter. Indeed, as $m \rightarrow \infty$, this curve will collapse to the $\nu=1$ curve.

\section{Applications}\label{sec:applications} 

\noindent
{a. An agricultural experiment}

\citet{diniz-etal2010} use a correlated binomial model proposed by \citet{luceno-deceballos1995} to analyze data from an experiment on soybean seeds. The model posits summands that are Binomial $(m,p)$, with correlation $\rho$.  They prove that the sum has the same distribution as a mixture of two distributions: with probability $(1-\rho)$ the usual binomial, and with probability $\rho$, a Bernoulli $(p)$ on the points $0$ and $m$. They use an MCMC with data augmentation to fit the model.

The data themselves come from having planted six seedlings in each of 20 pots, and using the judgement of an expert as to which seedlings were successful. The goal was to examine the extent to which competition among the seedings affected the outcomes. The raw data given by \citet{diniz-etal2010} is reported in Table~\ref{tab:one}.

\begin{table}[htbp]
\begin{tabular}{cc}
\# of ``good" plants & observed data\\
0 & 0 \\
1 & 2\\
2 & 2 \\
3 & 5\\
4 & 5\\
5 & 3\\
6 & 3\\
\end{tabular}
\caption{Observed frequency of ``good" plants from \citet{diniz-etal2010}.}\label{tab:one}
\end{table}

To employ the COMB model, I choose to use the prior specified by \eqref{eq:nine}, with $a=b=c=0$. This prior is centered on a Binomial model with $p = 1/2$ (which implies $\Psi = 0)$, which seems reasonable.

The contours of the resulting posterior distribution are shown in Figure~\ref{fig:four}. The maximum posterior point is $\hat{\Psi} = 0.30$ and $\hat{\nu} = 0.54$, with inverse Hessian
$$
\sum = \left(\begin{array}{ll}
0.028 & 0018\\
0.018 & 0063
\end{array} \right).
$$
In view of the elliptical shape of the contours in Figure~\ref{fig:four}, it is reasonable to approximate the posterior with a normal distribution with mean $(\hat{\Psi}, \hat{\nu})$ and covariance $\Sigma$, as would be suggested by the asymptotic distribution of posterior distributions from conditionally independent models.

\begin{figure}
\includegraphics[width=5in]{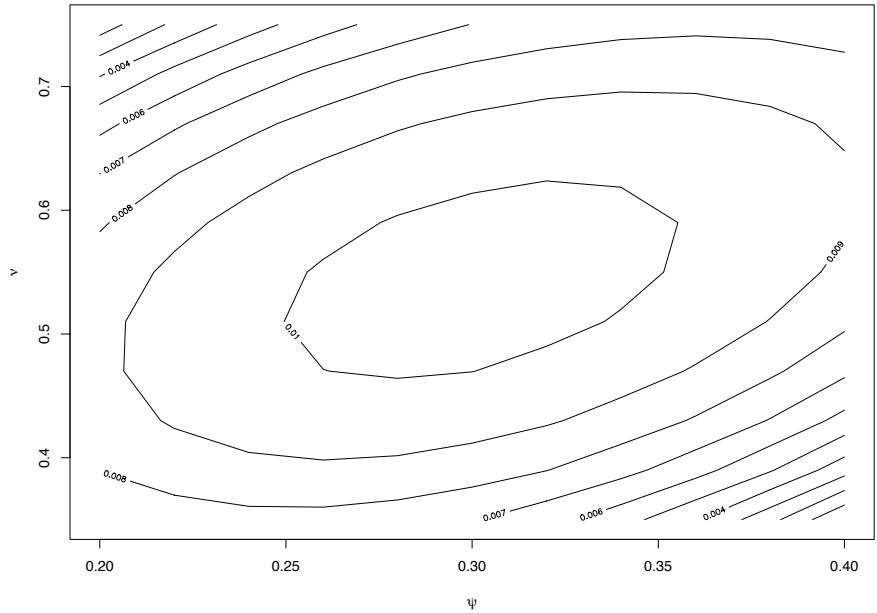}
\caption{Contour plot of the COMB posterior distribution.}\label{fig:four}
\end{figure}

\citet{diniz-etal2010} compare the fit of their model (which they call the ``correlated binomial" (CB)), to that of a binomial distribution.

Extending Table~\ref{tab:one}, Table~\ref{tab:two} below reports the estimated fits of all three models:
\begin{table}[htbp]
\begin{tabular}{ccccc}
\# of ``good" plants & observed data & binomial fit & CB fit & COMB fit\\
0 & 0 & 0.06 & 1.19 & 0.35\\
1 & 2 & 0.61 & 0.79 & 1.24\\
2 & 2 & 2.46 & 2.73 & 2.76\\
3 & 5 & 5.28 & 5.03 & 4.36\\
4 & 5 & 6.37 & 5.21 & 5.04\\
5 & 3 & 4.10 & 2.87 & 4.14\\
6 & 3 & 1.09 & 2.17 & 2.12\\
\end{tabular}
\caption{Fits of various models to the soybean data}\label{tab:two}
\end{table}

The sum of squared errors for the three models are as follows: Binomial 8.96; CB 4.16; COMB 3.77.

It is notable that the COMB estimate of $\nu$ is less than 1, indicating positive association in the soybean seeds. This suggests that competition for nutrients is not the dominant phenomenon in this data set. Further investigation and experimentation might then be warranted to discover the reasons for this positive association. The CB fit did find a positive correlation $\hat\rho = 0.13$. However, the CB model requires $\rho \geq 0$ and hence it could not have found a negative correlation.

In summary, the COMB model offers the following advantages over the CB model:
\begin{enumerate}
\item it allows for both positive and negative association
\item it allows for a conjugate analysis, obviating the need for an MCMC
\item at least for this data set and a squared error metric, it fits better, with the same number of parameters.
\end{enumerate}

\noindent
{\bf b. Killings in Medieval Norway}

In Norway just after the Viking Period, the law distinguished a killing from a murder. In both, there was somebody dead. However, in the former, the killer went to the King's representative within 24 hours and confessed. (Absent such prompt confession, it would be a murder, punishable by execution or banishment). The King's representative would write a letter to the killer stating that the killer was under the protection of the King. An investigation would ensue, resulting in a second letter to the killer, specifying how much was owed to the King, and how much to the family of the deceased. There would then be receipts to the killer  for the payments (two more letters), and a final letter from the King's representative to the killer saying that it was all over. Thus the killer would have received five letters.

Several hundred of these letters have survived in the intervening centuries, and a complete list of those found is available. Additionally, there are mentions of killings in other documents such as private letters, Bishop's records, etc. A simple representation of the data is a $6 \times 2$ matrix, where the first dimension records the number of letters to the killer that survive, and the second is whether or not the killing is mentioned in other sources. Of course, there is the $(0,0)$ cell of killings for which no letters survive and for which there are no other mentions. To estimate this cell, and hence the total number of killings, \citet{kadane-naeshagen2013,kadane-naeshagen2014} resort to a dual-systems estimate.

Since there's no obvious reason why the survival of letters in the killer's archive should be related to whether the killing is mentioned in the other sources, an independence assumption between the two dimensions  seems reasonable. To model the number of letters from a given killing that might survive, a first thought might be a binomial model. However, since all five letters went to the killer, and were likely stored together, at least at first, it is reasonable to suppose that the event of the survival of a given letter to the killer would be positively associated with the event of the survival of the other letters to the same killer. Thus one would expect $\nu \leq 1$ in the COMB, and Kadane and Naeshagen imposed a prior on $\nu$ putting zero probability on the space $\nu \geq 1$. As it happened, the data favors $\nu > 1$, so the posterior piled up at $\nu =1$, the binomial model.

Nonetheless, this was a successful application of the COMB, in that it allowed for (and rejected) what appeared to be the biggest reasonable threat to the model.

\section{Conclusion}\label{sec:conclusion}

The COMB distribution deserves a place in the tool kit of a statistician. Not all Bernoulli random variables are independent, so a one-parameter extension of the binomial distribution, such as the COM-Binomial, may find other useful applications.

\appendix\label{appA}
\section{Proof of Propositions 1 and 2}
Suppose $X_1,X_2,\ldots,X_m$ have the same means and variances, and identical correlations $\rho$. Then $\rho \geq - 1/(m-1)$.
\begin{proof}
Let $Y_i = (X_i - E(X_i))/\sigma(X_i), i=1,\ldots,m$.
Then $Y_1,Y_2,\ldots,Y_m$ satisfy $E(Y_i)=0$ and $\var(Y_i)=1$. Because correlations are unaffected by location and scale changes, they still have  common covariance $\rho$. Now
\begin{eqnarray*}
 0 & \leq & \var(\sum^m_{i=1}Y_i) = E(\sum^m_{i=1}Y_i)^2 - (E(\sum^m_{i=1}Y_i))^2\\
& = & E(\sum Y_i)^2 = E\sum^m_{i=1}Y^2_i +  \raisebox{-4mm}{\mbox{$\displaystyle{\sum^m_{i=1}\sum^m_{j=1}}\atop{i\neq j}$}} E Y_i Y_j\\
& = & m + m(m-1) \rho  
\end{eqnarray*}
from which the desired result follows immediately.
\end{proof}

\begin{proof}
 For each $k$, there are $\binom{m}{k}$ different arrangements of $k$ $i$'s and $m-k$ 0's. Let each of them have probability $p_k/\binom{m}{k}$. Then $P\{\sum^m_{i=1} X_i = k\} = p_k$ and the $X$'s are exchangeable.

To show uniqueness, if the sum of the probabilities of the sequences with exactly $k$ 1's is not $p_k$, the sum condition is violated. If their probabilities are not equal, exchangeability is violated.
\end{proof}

\section{Proof of Theorem}

To obtain an upper bound on $K^{-1}$, a lower bound on $Z$ is needed. I proceed from Jensen's inequality, using the convexity of $\log(x)$ ({\em i.e.}, the second derivative is negative):

Let $q_0,\ldots, q_m$ be arbitrary probabilities that are non-negative and sum to 1. 

Then
$$
\log\left(\sum^m_{k=0} q_k a_k\right) \geq \sum^m_{k=0} q_k \log a_k.
$$

Now
\begin{eqnarray*}
\log Z(\Psi,\nu) & = & \log \sum^m_{k=0} q_ke^{\Psi k}\bigg/q_k(k!(m-k)!)^\nu\\
& \geq & \sum^m_{k=0} q_k \log \left(e^{\Psi k}\bigg/ q_k(k!(m-k)!)^\nu\right)\\
& = & \Psi \sum^m_{k=0}q_k \cdot k - \nu \sum^m_{k=0} q_k \log (k!(m-k)!) - \sum^m_{k=0}q_k \log q_k.
\end{eqnarray*}

Let $Q$ be a random variable on the non-negative integers $\{0,1,\ldots,m\}$ with probability mass $Pr\{Q=k\}=q_k$. Then the bound can be written as
$$
Z(\Psi,\nu) \geq \underline{Z}(\Psi,\nu) = e^{\Psi E(Q) - \nu E(\log (Q!(m-Q)!)}\prod^m_{k=0}q^{q_k}_k
$$
Therefore
\begin{eqnarray*}
K^{-1}(a,b,c) & = & \int^\infty_{-\infty} \int^\infty_{-\infty} g(\Psi,\nu)e^{a \Psi-b \nu} Z^{-c}(\Psi, \nu) d \Psi d \nu\\
& \geq & \int^\infty_{-\infty}\int^\infty_{-\infty} g(\Psi,\nu)e^{a \Psi-b \nu} e^{c\{\Psi E Q - \nu E(\log(Q !(m-Q)!)\}} \prod^m_{k=0} q^{cq_k}_k.
\end{eqnarray*}

Substituting $g(\Psi,\nu) = \o(\Psi)\o(\nu-1)$ and collecting terms
$$
K^{-1}(a,b,c) \geq \int^\infty_{-\infty} \frac{e^{-\Psi^2/2+\Psi(a-cE(Q))-(\nu-1)^2/2 - \nu(b-cE(\log Q!(m-\theta)!))}}{2 \pi}\prod^m_{k=0}q^{cq_k}_k d\Psi d\nu.
$$

Both the integral with respect to $\Psi$ and that with respect to $\nu$ are of the form $e^{-(x^{2}/2)+ xk}$, which are normal integrals, and hence finite. The constant $(\prod^m_{k=0} q^{cq_k}_k)/2\pi$ is also finite. Therefore we have $K^{-1}(a,b,c)  < \infty$, as was to be shown. \hfill $\Box$

Remarks:
\begin{enumerate}
\item This proof uses bounds similar to those in \citet{kadane-etal2006}.
\item This theorem also holds if  instead of $g(\theta, \nu)$ as specified in \eqref{eq:nine}, any other normal distribution for $(\Psi,\nu)$ were used instead.
\end{enumerate}

\section{The Conway-Maxwell-Multivariate Distribution}

The Conway-Maxwell-Multivariate (COMM) Distribution has probability mass  function (for fixed $m$)
$$
P\{\pmb{X} = \pmb{k} \big| (\pmb{P},V)\} = \binom{m}{\pmb{k}}^\nu \prod^r_{i=1} p_i^{k_i}\bigg/
\sum_{{\pmb j}\in D} \binom{m}{\pmb{j}}^\nu \prod^r_{i=1} p_i^{j_i},\pmb{k} \epsilon D
$$
where
$$ 
\begin{array}{rcl}
\pmb{p} &  = & (p_1,\ldots, p_r), p_i \geq 0, \sum^r_{i=1} p_i=1\\
\pmb{k} & = & (k_1,\ldots, k_r), k_i \geq 0, \sum^r_{i=1} k_i = m, k_i's \mbox{ integers}\\
\binom{m}{\pmb{k}} & = & \frac{m!}{k_1(k_2!\ldots k_r!)}
\end{array}
$$
and $D$ is the set of vectors of integers $\pmb{j}$ satisfying $j_i \geq 0$ 
and $\sum^r_{i=1}j_i = m$.

\begin{prop}\label{prop3}
Suppose $X_1,\ldots, X_r$ are independently distributed with probability mass function Conway-Maxwell Poisson $X_i \sim CMP(\lambda_i,\nu)$:
$$
P\{X_i = s_i \mid \lambda_i,\nu\} = \frac{\lambda_i^{s_i}}{(s_i!)^
\nu Z(\lambda_i,\nu)}
$$
where $Z(\lambda_i,\nu) = \sum^\infty_{j=0} \frac{\lambda_i^j}{(j!)^\nu}.$

Then $\pmb{X} \mid \sum^r_{i=1} X_i = m$ has a COMM Distribution with parameters $p_i=\lambda_i/\lambda$ and $\nu$, where $\lambda=\sum^r_{i=1} \lambda_i$.
\end{prop}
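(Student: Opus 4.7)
The plan is to do a direct computation of the conditional probability $P\{\mathbf{X}=\mathbf{k}\mid\sum_{i=1}^r X_i=m\}$ using the definition of conditional probability and the independence of the $X_i$. Since this is the multivariate analogue of \eqref{eq:three}, whose scalar version was already invoked from \citet{shmueli-etal2005}, I expect the argument to be essentially a bookkeeping exercise with no substantive obstacle; the only thing to watch is the tidy cancellation of $\lambda^m$ and $(m!)^\nu$ terms using the constraint $\sum k_i=m$.

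First I would fix $\mathbf{k}\in D$ and write, by independence,
\[
P\{\mathbf{X}=\mathbf{k}\}=\prod_{i=1}^r\frac{\lambda_i^{k_i}}{(k_i!)^\nu Z(\lambda_i,\nu)}.
\]
Since $\{\mathbf{X}=\mathbf{k}\}\subseteq\{\sum X_i=m\}$ for $\mathbf{k}\in D$, the numerator of the conditional probability is exactly this product. The denominator is the sum of the same expression over all $\mathbf{j}\in D$. The factor $\prod_i Z(\lambda_i,\nu)^{-1}$ is independent of $\mathbf{k}$ and $\mathbf{j}$, so it cancels between numerator and denominator.

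Next I would perform the two key substitutions. Using $\lambda_i=\lambda p_i$ and $\sum_i k_i=m$, I rewrite
\[
\prod_{i=1}^r \lambda_i^{k_i}=\lambda^{\sum k_i}\prod_{i=1}^r p_i^{k_i}=\lambda^m\prod_{i=1}^r p_i^{k_i}.
\]
Using the multinomial-coefficient identity $\prod_i k_i!=m!/\binom{m}{\mathbf{k}}$, I rewrite
\[
\prod_{i=1}^r\frac{1}{(k_i!)^\nu}=\frac{1}{(m!)^\nu}\binom{m}{\mathbf{k}}^\nu.
\]
The same two identities hold for any $\mathbf{j}\in D$ (because $\sum j_i=m$ as well), so the common factor $\lambda^m/(m!)^\nu$ appears in every term of the denominator and cancels with the one in the numerator.

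Finally, collecting what is left yields
\[
P\{\mathbf{X}=\mathbf{k}\mid\textstyle\sum_{i=1}^r X_i=m\}=\frac{\binom{m}{\mathbf{k}}^\nu\prod_{i=1}^r p_i^{k_i}}{\sum_{\mathbf{j}\in D}\binom{m}{\mathbf{j}}^\nu\prod_{i=1}^r p_i^{j_i}},
\]
which is precisely the COMM mass function with parameters $(\mathbf{p},\nu)$, completing the proof. Since nothing here uses positivity or integrality of $\nu$, the result holds on the same parameter range as the COMB case. The only place to be mildly careful is to note that the argument implicitly requires $P\{\sum X_i=m\}>0$, which follows because every term in $D$ contributes a strictly positive summand as long as each $\lambda_i>0$ (and if some $\lambda_i=0$ the corresponding $X_i$ is degenerate at zero and the result reduces to the lower-dimensional case).
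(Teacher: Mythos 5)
Your proposal is correct and follows essentially the same route as the paper: a direct computation of the conditional probability using independence, the substitution $\lambda_i=\lambda p_i$ with $\sum_i k_i=m$, and the identity $\prod_i k_i!=m!/\binom{m}{\pmb{k}}$, so that $\lambda^m/(m!)^\nu$ and $\prod_i Z(\lambda_i,\nu)$ cancel between numerator and denominator. The only cosmetic difference is that the paper first writes out $P\{S=m\}$ explicitly before forming the ratio, whereas you cancel within the ratio directly; your added remark about $P\{\sum X_i=m\}>0$ is a small point of care the paper leaves implicit.
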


\begin{proof}
Let $S = \sum^r_{i=1} X_i,  \pmb{\lambda} = (\lambda, \ldots, \lambda_r)$ and $G(\pmb{p},\nu) = \sum_{\pmb{j} \in D} \binom{m}{\pmb{j}}^\nu \prod^r_{i=1} p_i^{j_i}$.

Then
$$
\begin{array}{rcl}
P\{S=m\} & = & \sum_{\pmb{j}\in D} \prod^r_{i=1} \frac{\lambda_i^{j_i}}{(j_i !)^\nu Z(\lambda_i,\nu)}\\
& = & \frac{\lambda^m}{(m!)^\nu} \prod^r_{i=1} Z(\lambda_i,\nu) \sum_{\pmb{j}\in D} \left(\lambda_i\big/\lambda\right)^{j_i} \binom{m}{\pmb{j}}^\nu\\
& = & \frac{\lambda^m}{(m!)^\nu} \cdot \frac{1}{\prod^r_{i=1}Z(\lambda_i,\nu)} G(\pmb{x}/\lambda,\nu)
\end{array}
$$

Hence
$$
\begin{array}{rcl}
P\{\pmb{X} = \pmb{k}\mid S=m\} & = & \prod^r_{i=1} \frac{\lambda_i^{k_i}}{(k_i!)^\nu} Z(\lambda_i,\nu)
\bigg/\frac{\lambda^m}{(m!)^\nu} \prod^r_{i=1}Z^{G(\pmb{\lambda}/\lambda, \nu)}_{\lambda_i,\nu)}\\
& = & \prod^r_{i=1} (\lambda_i/\lambda)^{k_i}\binom{m}{\pmb{k}}^\nu 
\bigg/ G(\pmb{\lambda}/\lambda,\nu), \mbox{ for } \pmb{k} \in D
\end{array}
$$
which is the probability mass function of the COMM distribution with parameters $p_i=\lambda_i/\lambda (i=1,\ldots, r)$ and $\nu$.
\end{proof}

\begin{prop}\label{prop4}
Let $P\{\pmb{S}=\pmb{k}\} = p_{\pmb{k}}\geq 0$, where $\sum_{\pmb{k}\in D} p_{\pmb{k}} = 1$. Then there exists a unique distribution on $\pmb{X}_1, \pmb{X}_2, \ldots, \pmb{X}_m$ such that $\pmb{X}_1, \pmb{X}_2, \ldots, \pmb{X}_m$ are exchangeable and $\sum^m_{i=1}\pmb{X}_i$ has the same distribution as does $S$.
\end{prop}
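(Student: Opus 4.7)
The plan is to mimic the proof of Proposition~\ref{prop2} essentially verbatim, with the role of ``Bernoulli summand'' played by a categorical summand. Specifically, since each $\sum_{i=1}^m \mathbf{X}_i$ must land in $D$ and there are $m$ terms, it is natural (and forced by the target distribution of $\mathbf{S}$) to take each $\mathbf{X}_i$ to be a standard basis vector in $\mathbb{R}^r$, i.e., $\mathbf{X}_i \in \{\mathbf{e}_1, \ldots, \mathbf{e}_r\}$. Then for each $\mathbf{k} = (k_1, \ldots, k_r) \in D$, the number of sequences $(\mathbf{x}_1, \ldots, \mathbf{x}_m)$ with $\mathbf{x}_i \in \{\mathbf{e}_1, \ldots, \mathbf{e}_r\}$ and $\sum_{i=1}^m \mathbf{x}_i = \mathbf{k}$ is exactly the multinomial coefficient $\binom{m}{\mathbf{k}} = m!/(k_1! \cdots k_r!)$.

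For existence, I would define the joint distribution by assigning each such sequence probability $p_{\mathbf{k}}/\binom{m}{\mathbf{k}}$. Two verifications are then needed. First, the total mass is $\sum_{\mathbf{k}\in D} \binom{m}{\mathbf{k}} \cdot p_{\mathbf{k}}/\binom{m}{\mathbf{k}} = \sum_{\mathbf{k}\in D} p_{\mathbf{k}} = 1$, and the marginal distribution of $\sum_{i=1}^m \mathbf{X}_i$ at $\mathbf{k}$ is $p_{\mathbf{k}}$, as required. Second, exchangeability is automatic: any permutation of a sequence with sum $\mathbf{k}$ is again a sequence with sum $\mathbf{k}$, and by construction all such sequences carry the same probability.

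For uniqueness, I would reverse the argument. Suppose some joint distribution on $(\mathbf{X}_1, \ldots, \mathbf{X}_m)$ is exchangeable with $\sum_{i=1}^m \mathbf{X}_i \overset{d}{=} \mathbf{S}$. Exchangeability forces all $\binom{m}{\mathbf{k}}$ sequences with a given sum $\mathbf{k}$ to share a common probability $q_{\mathbf{k}}$, since any two such sequences are permutations of each other. The marginal condition then gives $\binom{m}{\mathbf{k}} q_{\mathbf{k}} = p_{\mathbf{k}}$, so $q_{\mathbf{k}} = p_{\mathbf{k}}/\binom{m}{\mathbf{k}}$, pinning down the joint distribution.

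There is no real obstacle; the only subtlety is pinning down the correct analog of a Bernoulli variable in the multivariate setting, namely a one-hot categorical vector whose sum is a composition in $D$. Once that identification is made, the combinatorial counting and the two-line existence/uniqueness argument are completely parallel to Proposition~\ref{prop2}.
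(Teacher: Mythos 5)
Your proof is correct and is essentially the paper's own argument: assign probability $p_{\pmb{k}}/\binom{m}{\pmb{k}}$ to each of the $\binom{m}{\pmb{k}}$ one-hot sequences summing to $\pmb{k}$, verify the sum distribution and exchangeability, and get uniqueness because exchangeability forces all sequences with the same sum to have equal probability while the sum constraint fixes their total. Your explicit identification of the summands as one-hot categorical vectors simply spells out what the paper's counting of ``arrangements of 1's and 0's'' leaves implicit.
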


\begin{proof}
For each $\pmb{k}\in D$, there are $\binom{m}{\pmb{k}}$ different arrangements of 1's and 0's such that each vector component $i$ has $k_i$ 1's and $(m-k_i)$ 0's. Let each such arrangement have probability $p_{\pmb{k}} \big/\binom{m}{\pmb{k}}$.
Then $P(\sum^r_{i=1}\pmb{X}_i=\pmb{k})=p_{\pmb{k}}$
and the $\pmb{X}_i$'s are exchangeable. To show uniqueness, if the sum of the probabilities of the sequences with $k_i$ 1's in the $i^{th}$ vector component for each $i$ were not $p_{\pmb{k}}$, the sum constraint would not be met. If they did not have equal probability, exchangeability would be violated.
\end{proof}

\begin{prop}\label{prop5}
The COMM distribution has the following sufficient statistics:
$$
\begin{array}{rcl}
S_0 & = & \sum^n_{j=1} \log [k_{ij} ! \ldots k_{rj}!]\\
S_i & = & \sum^n_{j=1} k_{ij}, i = 1, \ldots, r-1
\end{array}
$$
where $k_{ij}$ is the $i^{th}$ component of the $j^{th}$ sample.
\end{prop}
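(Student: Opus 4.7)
The plan is to apply the Fisher--Neyman factorization theorem to the joint likelihood of $n$ independent samples from the COMM distribution, reading off the sufficient statistics directly.

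First I would write the joint likelihood
\[
L(\mathbf{p},\nu) \;=\; \prod_{j=1}^n \binom{m}{\mathbf{k}_j}^\nu \prod_{i=1}^r p_i^{k_{ij}} \bigg/ G(\mathbf{p},\nu)^n,
\]
where $\mathbf{k}_j = (k_{1j},\ldots,k_{rj})$ is the $j$-th observation. The denominator depends only on the parameters, so it plays no role in the factorization argument. For the numerator, I would separate the multinomial coefficient and the $p_i^{k_{ij}}$ terms.

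Next I would rewrite the multinomial coefficient as
\[
\prod_{j=1}^n \binom{m}{\mathbf{k}_j}^\nu \;=\; (m!)^{n\nu}\exp\!\Bigl(-\nu\sum_{j=1}^n \log[k_{1j}!\cdots k_{rj}!]\Bigr) \;=\; (m!)^{n\nu} e^{-\nu S_0},
\]
so the data enter this factor only through $S_0$. For the product of $p_i$ terms, exchanging the order of multiplication gives $\prod_{i=1}^r p_i^{T_i}$ with $T_i = \sum_{j=1}^n k_{ij}$. The constraint $\sum_{i=1}^r k_{ij} = m$ for every $j$ forces $\sum_{i=1}^r T_i = nm$, so $T_r = nm - \sum_{i=1}^{r-1} S_i$ is determined by $S_1,\ldots,S_{r-1}$. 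Substituting, I can write
\[
\prod_{i=1}^r p_i^{T_i} \;=\; p_r^{nm}\prod_{i=1}^{r-1}\Bigl(\frac{p_i}{p_r}\Bigr)^{S_i},
\]
which depends on the data only through $S_1,\ldots,S_{r-1}$.

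Combining the pieces, the likelihood has the form
\[
L(\mathbf{p},\nu) \;=\; h(\mathbf{k}_1,\ldots,\mathbf{k}_n)\cdot g\bigl(S_0,S_1,\ldots,S_{r-1};\mathbf{p},\nu\bigr),
\]
with $h \equiv 1$ and all data-dependence absorbed into $g$. By the Fisher--Neyman factorization theorem, $(S_0,S_1,\ldots,S_{r-1})$ is sufficient for $(\mathbf{p},\nu)$. The only subtlety, and the one point I would flag explicitly in the writeup, is the reason for stopping the $S_i$'s at $i=r-1$: the identity $\sum_i k_{ij}=m$ makes $S_r$ a deterministic function of $n$, $m$, and $S_1,\ldots,S_{r-1}$, so including it would be redundant. (Equivalently, the COMM is an $r$-parameter exponential family in the natural parametrization $\nu$ and $\log(p_i/p_r)$ for $i=1,\ldots,r-1$, and these statistics are precisely the canonical ones.)
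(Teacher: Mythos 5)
Your proposal is correct and follows essentially the same route as the paper: both factor out $p_r^{nm}(m!)^{n\nu}$, use the constraint $\sum_i k_{ij}=m$ to reduce to the ratios $p_i/p_r$, and exhibit the likelihood in exponential-family form with the data entering only through $S_0,S_1,\ldots,S_{r-1}$. Your explicit appeal to the Fisher--Neyman factorization and the remark on why $S_r$ is redundant simply make explicit what the paper leaves implicit.
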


\begin{proof}
$$
\begin{array}{rcl}
p(\pmb{k}_1, \ldots, \pmb{k}_n \mid \pmb{p},\nu) & = & \prod^n_{j=1} \left[\binom{m}{\pmb{k}_j}^\nu \prod^r_{i=1} p_i^{k_{ij}}\bigg/G(\pmb{p})\right]\\
& \propto & p_r^{nm} (m!)^{\nu n} \prod^n_{j=1} \left[\prod^{r-1}_{i=1} (p_i/p_r)^{k_{ij}}\right] \left[\prod^r_{i=1} k_{ij}!\right]^{-\nu}\\
& \propto & e \sum^{r-1}_{i=1} \log(p_i/p_r) \sum^n_{j=1} k_{ij} - \nu \sum^n_{j=1} \log (\prod^r_{i=1} k_{ij}!)\\
& = & e^{\sum^{r-1}_{i=1} \log(p_i/p_r) S_i - \nu S_0}.
\end{array}
$$
\end{proof}

Let $\pmb{\psi}= (\log(p_1/p_r), \log(p_2/p_r), \ldots, \log (p_{r-1}/p_r))$. Consider a conjugate family of the form
$$
h^*(\pmb{\psi},\nu) = g(\pmb{\psi},\nu)e^{-\pmb{\psi}\cdot \pmb{a} - b\nu}G^{-c}(\pmb{\psi},\nu) K(\pmb{a},b,c)
$$
 where $\pmb{a}$ is a vector of length $r-1$, and $b$ and $c$ are positive numbers. Here $G(\pmb{\psi},\nu) = \sum_{{\pmb{j}\in D}} \binom{m}{\pmb{j}}^\nu \prod^{r-1}_{i=1} \psi^{j_i}_i$. If $g(\pmb{\psi},\nu)$ is taken to have a normal distribution (of dimension $r$), then $K^{-1}(\pmb{a},b,c)< \infty$. 

In this case, updating occurs as follows:
$$
\pmb{a}^\prime = \pmb{a} + \pmb{k}^*, b^\prime = b + \log (k_1! k_2! \ldots k_r!), c^\prime = c+1,
$$ 
where $\pmb{k}^* = (k_1,k_2,\ldots,k_{r-1})$.

\begin{proof}
(Generalization of Appendix B).

Let $Q$ be a distribution over $D$ and let $Q^* = (Q_1,\ldots, Q_{r-1})$. Then Jensen's inequality gives
$$
\begin{array}{rcl}
\log \sum_{\pmb{k}\in D} q_k a_j & \geq & \sum_{k \in D} q_k \log a_k\\
\log G(\pmb{\psi},\nu ) & = & \log \sum_{j \in D} q_k \log (e^{\pmb{\psi} \cdot \pmb{k}^*}\bigg/q_k(\prod k_i!)^\nu)\\
& \geq & \sum_{j \in D} q_k \log (e^{\pmb{\psi}\cdot \pmb{k}^*}\bigg/ q_k(\prod k_i !)^\nu)\\
& = & \pmb{\psi} \cdot \sum_{j \in D} q_{\pmb{k}} \pmb{k}^* - \nu \sum_{k \in D} q_k \log(\prod k_i !) - \sum_{j \in D} q_j \log q_j\\
& = & \pmb{\psi} E(\pmb{Q}^*) - \nu E \log(Q_1! Q_2! \ldots Q_r!) - \sum_{j \in D} q_j \log q_j.
\end{array}
$$
Since these are linear in $\pmb{\psi}$ and $\nu$, with any normal prior on $(\pmb{\psi},\nu)$, the integral $K^{-1}$ is finite
\end{proof}

\section*{Acknowledgements} 
I thank Christian Robert, Kim Sellers, Galit Shmueli and Rebecca Steorts for helpful comments.

%\bibliographystyle{jasa}
%\bibliography{com}

\end{document}